\documentclass[runningheads,a4paper]{llncs}

\usepackage[T1]{fontenc}
\usepackage{amsmath,amssymb}
\usepackage{graphicx}
\usepackage{booktabs}
\usepackage{array}
\usepackage{colortbl}   
\definecolor{rowtint}{gray}{0.93}
\newcolumntype{R}[1]{>{\raggedright\arraybackslash}p{#1}}
\usepackage{url}
\usepackage{xspace}
\usepackage{tikz}
\usetikzlibrary{arrows.meta,positioning,shapes,fit,backgrounds,calc}
\usepackage{pgfplots}
\pgfplotsset{compat=1.17}
\usepackage[hidelinks]{hyperref}
\usepackage{microtype}
\usepackage{bbding}     
\usepackage{orcidlink}
\makeatletter
\@ifundefined{orcidID}%
  {\newcommand{\orcidID}[1]{\,\orcidlink{#1}}}%
  {\renewcommand{\orcidID}[1]{\,\orcidlink{#1}}}
\makeatother
\newcommand{\sys}{\textsc{RV-Fabric}\xspace}

\begin{document}

\title{A Resilient Runtime-Verification Fabric for Security Monitoring
       of Critical Edge-IoT Infrastructure\thanks{This is a preprint of an article published in the 21st International Conference on Critical Information Infrastructures Security (CRITIS 2026).}}
\titlerunning{A Resilient RV Fabric for Critical Edge-IoT Security Monitoring}
\author{%
Nikolaos Kekatos\,\textsuperscript{(\Envelope)}\inst{1}\orcidID{0009-0000-2358-4270} \and
Marinelio Chintri\inst{2}\orcidID{0009-0009-3923-1790} \and
Panagiotis Katsaros\inst{3}\orcidID{0000-0002-4309-5295} \and
Alexios Lekidis\inst{4}\orcidID{0000-0002-8243-7075} \and
Tom Nianios\inst{1}\orcidID{0009-0008-2238-3555} \and
Ioannis Seitoglou\inst{2}\orcidID{0009-0002-5807-3635} \and
Anastasios Temperekidis\inst{1}\orcidID{0009-0005-5856-2780} \and
Stylianos Basagiannis\inst{2}\orcidID{0000-0002-4513-0541}}
\authorrunning{N. Kekatos et al.}
\institute{%
$^{1}$Clone Systems, Cyprus, \email{\{nkekatos,tnianios,atemperekidis\}@clone-systems.com}\\
$^{2}$International Hellenic University, Greece, \email{\{basagiannis,marchint,ioaseito\}@ihu.gr}\\
$^{3}$Aristotle University of Thessaloniki, Greece, \email{katsaros@csd.auth.gr}\\
$^{4}$University of Thessaly, Greece, \email{alekidis@uth.gr}\\[2pt]
\textsuperscript{(\Envelope)}\,Corresponding author.}
\maketitle

\begin{abstract}
Protecting critical infrastructure increasingly depends on continuously verifying
large IoT fleets against formal security specifications at runtime. Yet the
runtime-verification (RV) pipelines proposed for this task are typically single-host
prototypes whose monitors read a shared log file, with no resilience to the failures
such deployments incur: a crash or overload silently drops events, clock skew corrupts
the ordering metric monitors require, a time-triggered ``node has gone silent'' property
cannot fire when the network itself falls silent, and one slow consumer stalls the
pipeline. Each failure is \emph{silent}: the monitor keeps emitting verdicts over a
corrupted view. We present
\sys, a resilient delivery layer that carries the hierarchy over two brokers (MQTT for
device ingest, a durable stream broker for backend delivery) and re-establishes five
continuity guarantees: durable delivery under crashes, a trusted event order, progress
under total silence, consumer isolation and flow control under bounded overload, each
an invariant conditioned on broker durability.
Above the transport, \sys makes evidence completeness part of runtime-verification
semantics: every verdict carries a status (sound, degraded, incomplete or unavailable)
derived from delivery gaps, retention pressure and liveness, so an incomplete stream
cannot yield an unqualified all-clear. Under controlled fault injection on a containerised
testbed, measured against a fault-free oracle using the real MonPoly engine, the
shared-log baseline misses six of seven injected incidents, reporting each as an
unqualified all-clear, whereas \sys preserves all seven; removing a delivery mechanism
reintroduces silent loss, removing isolation costs only timeliness. Two published
critical-infrastructure datasets, water-SCADA and IoT/IIoT, replay end-to-end.

\keywords{Critical infrastructure protection \and Resilience \and Runtime
verification \and IoT security \and Edge-cloud continuum}
\end{abstract}

\section{Introduction}
\label{sec:intro}
Security monitoring for the Internet of Things is moving off individual devices onto the
tiered edge-cloud continuum, where evidence from many devices must be collected, ordered,
and reasoned about beyond any single node~\cite{shi2016edge}. This matters because large
IoT fleets are now both critical infrastructure and a favoured attack surface:
internet-facing devices are routinely compromised and conscripted into coordinated
campaigns that a single device, seeing only its own traffic, cannot
detect~\cite{stellios2018iot,cavalli2024iotsec}. Runtime verification (RV), which
continuously checks a running system against explicit formal
specifications~\cite{rvsurvey,bartocci2018introduction}, suits such behaviour: unlike
signature- or learning-based intrusion detection, an RV monitor flags any execution that
violates a stated property and emits an auditable witness. Recent hierarchical designs
push this across a deployment's tiers, with lightweight per-device checks, per-gateway
aggregation, and fleet-wide temporal-logic correlation catching coordinated cross-device
attacks no single node can see~\cite{ares2026,csr2026}.

The verification logic of these designs is mature, but the delivery layer beneath
them is not, and for critical infrastructure that layer is where resilience is
won or lost. Published prototypes run on a single host, with every monitor reading
from one shared append-only log file. That arrangement suits a formal-methods
evaluation but collapses on the edge-cloud continuum:
\emph{(i)} device-to-gateway links are lossy, so records vanish before a monitor
ever sees them; \emph{(ii)} devices and gateways keep independent, skewing clocks,
breaking the monotone-timestamp assumption metric temporal-logic monitors
depend on; \emph{(iii)} time-triggered properties (``no device has reported for
$T$ seconds'') never fire when the network goes silent, because a log
with no new lines produces no events to advance the monitor's clock; and
\emph{(iv)} a single shared log serialises all consumers with no flow
control, so one slow monitor stalls the pipeline. These failures are silent: the monitors keep emitting
verdicts from a corrupted view of the fleet, so a dropped or mis-ordered
security-relevant event may cause an attack to go unflagged, exactly the
continuity failure a critical-infrastructure operator cannot tolerate
(Fig.~\ref{fig:silent}). \emph{In a distributed deployment, no news is not good
news: evidence completeness must be carried inside the verdict.}

\begin{figure}[t]
\centering
\footnotesize
\begin{tikzpicture}[
  ev/.style={draw, minimum size=3.4mm, inner sep=0pt},
  drop/.style={ev, draw=black!45, densely dashed, text=black!45},
  gap/.style={ev, thick},
  mon/.style={draw, rounded corners=2pt, minimum height=6.5mm, minimum width=15mm, align=center},
  vbad/.style={draw, rounded corners=2pt, minimum height=6.5mm, minimum width=26mm, align=center, fill=black!5},
  vok/.style={draw, rounded corners=2pt, minimum height=6.5mm, minimum width=26mm, align=center, thick},
  lbl/.style={font=\scriptsize, align=left, text=black},
  a/.style={-Stealth}]
\def\mx{3.5}\def\vx{6.6}\def\ry{0.4}\def\fy{-1.25}\def\toff{0.56}
\node[lbl] at (0.6,1.05) {events}; \node[lbl] at (\mx,1.05) {monitor};
\node[lbl] at (\vx,1.05) {verdict};
\node[lbl] (t1) at (-1.55,\ry) {Shared log};
\node[ev] (x1) at (0,\ry) {}; \node[ev,right=0.8mm of x1](x2){};
\node[drop,right=0.8mm of x2](x3){\tiny$\times$}; \node[ev,right=0.8mm of x3](x4){};
\node[mon] (m1) at (\mx,\ry) {MTL monitor};
\node[vbad] (v1) at (\vx,\ry) {\checkmark\ ``no violation''};
\draw[a](x4)--(m1); \draw[a](m1)--(v1);
\node[lbl,align=center] at (\vx,\ry-\toff) {$\times$ wrong: silent all-clear};
\node[lbl] (t2) at (-1.55,\fy) {\sys};
\node[ev] (y1) at (0,\fy) {}; \node[ev,right=0.8mm of y1](y2){};
\node[gap,right=0.8mm of y2](y3){\tiny!}; \node[ev,right=0.8mm of y3](y4){};
\node[mon] (m2) at (\mx,\fy) {MTL monitor};
\node[vok] (v2) at (\vx,\fy) {unknown / \texttt{incomplete}};
\draw[a](y4)--(m2); \draw[a](m2)--(v2);
\node[lbl,align=center] at (\vx,\fy-\toff) {\checkmark\ correct: gap surfaced};
\node[lbl,align=left] at (0.9,\fy-\toff) {$\times$ dropped\\{!} gap flagged};
\end{tikzpicture}
\Description{Two-row schematic comparing a shared-log pipeline with the fabric. In the top row, labelled Shared log, a sequence of event boxes with one dropped event marked by a cross feeds an MTL monitor, which outputs a check-marked verdict reading no violation, annotated as wrong: a silent all-clear. In the bottom row, labelled with the fabric name, the same sequence with the gap flagged by an exclamation mark feeds an MTL monitor, which outputs the verdict unknown / incomplete, annotated as correct: gap surfaced.}%
\caption{The failure mode \sys removes. On a shared log a dropped or mis-ordered event
is invisible to the monitor, which keeps emitting ``no violation'' over a corrupted view
(a \emph{silent} failure); \sys surfaces the breach through event-id gaps, downgrading
the verdict to $(\mathsf{unknown},\mathsf{incomplete})$ (Sect.~\ref{sec:guarantees}).}
\label{fig:silent}
\end{figure}

\paragraph{Contributions.} This paper makes four contributions.
\emph{(1)}~\sys, a resilient delivery layer carrying an unchanged three-layer monitoring
hierarchy~\cite{ares2026,csr2026} across the edge-cloud continuum, re-establishing the
delivery, ordering, liveness, isolation, and flow-control guarantees a shared log cannot
provide (Sect.~\ref{sec:arch}). \emph{(2)}~A threat model for verification-grade
monitoring mapping a tiered network/device/gateway adversary to its mitigation, with the
five continuity properties stated as broker-contingent invariants
(Sects.~\ref{sec:threatmodel}--\ref{sec:guarantees}). \emph{(3)}~An evidence-aware verdict
model, with a conditional-soundness proposition, attaching a completeness status to every
verdict so a degraded stream can no longer produce an unqualified all-clear
(Sect.~\ref{sec:guarantees}). \emph{(4)}~An evaluation quantifying the restored properties
under crash, clock skew, silence, and overload, including verdict preservation against a
fault-free oracle, and replaying two critical-infrastructure datasets through the real
engine (Sects.~\ref{sec:eval},~\ref{sec:casestudy}).

\paragraph{Novelty and positioning.} We claim no new messaging or logging primitive.
MQTT, durable stream brokers, transactional outboxes, deduplication, and heartbeats are
used exactly as they are, and deliberately so: an operator can deploy \sys on components
already present and already certified in their estate, whereas a novel replication
protocol would forfeit precisely that property. What these components do not supply is
the link to the verification layer above them. \sys makes the \emph{completeness of the
evidence stream} an explicit part of runtime-verification semantics: each verdict carries
a continuity status derived from delivery, ordering, retention, and liveness evidence
(Sect.~\ref{sec:guarantees}), so a corrupted or incomplete stream can no longer produce
an unqualified all-clear. A transport layer cannot make this judgement on its own,
because it does not know what the monitor required. The accompanying claim is therefore
one of \emph{necessity} rather than assembly, and we test it: removing any single
delivery mechanism (G1--G3, G5) reintroduces one specific silent failure, from $0$ to
$76$ of $1200$
timestamp-order violations without G2, two silence incidents reverting to false
all-clears without G3, and $1000$ of $2000$ verdicts silently evicted rather than
surfaced without G5 (Table~\ref{tab:ablation}); one level up, on the security verdicts themselves, the
shared-log baseline preserves one of seven injected incidents while the full fabric
preserves all seven, and dropping the tick alone yields two false all-clears that no
completeness status can flag (Sect.~\ref{sec:eval}). That is an argument no individual
component makes.

\section{Background: The Hierarchy and the Shared-Log Prototype}
\label{sec:background}
This section states what \sys inherits: the hierarchical monitoring workload it carries
and the single-host shared-log prototype it replaces, both from prior
work~\cite{ares2026,csr2026}. What \sys adds begins in Sect.~\ref{sec:arch}.

\paragraph{The three-layer hierarchy.}
\sys distributes a hierarchical monitoring workload across three tiers, each seeing a
wider slice of the fleet than the one below. \emph{Layer~1 (L1)}, on or beside each
device, runs stateless per-event sanity checks (payload size, publish rate, timestamp
monotonicity) and emits a compact per-window verdict. \emph{Layer~2 (L2)}, at the
gateway, aggregates the L1 verdicts of the devices it serves. \emph{Layer~3 (L3)}, in
the backend, evaluates fleet-wide metric first-order temporal-logic
(MFOTL~\cite{basin2015mfotl}) properties such as coordinated cross-device attacks and
silent-node failures: first-order quantification with metric bounds (e.g.\ ``two distinct
devices exceed a threshold within $2$\,s'') that relate events at several gateways on one
ordered timeline and are inexpressible at a single tier~\cite{ares2026,csr2026}. The
design is engine-agnostic; at L3 we use the MonPoly~\cite{monpoly} and
RTLola~\cite{rtlola} hybrid of a recent hierarchical design~\cite{ares2026}.

\paragraph{The shared-log prototype and its gaps.}
In that design's prototype all three layers run on one host over a single shared log,
exposing the four failures of Sect.~\ref{sec:intro} (loss, skew, silence, no flow
control). The prototype was reported to drop $\sim$1\% of events at ingest~\cite{ares2026},
but a controlled same-host replay does \emph{not} reproduce this (Sect.~\ref{sec:eval});
the shared log's real exposure is loss under crash or overload, broken per-device ordering
under skew, and silence-liveness. Prior work names an event bus as the
fix~\cite{ares2026,csr2026}; this paper builds that bus and shows it does not suffice on
its own, because the monitor must also know when the bus has failed. Fig.~\ref{fig:arch} shows the resulting architecture.

\section{The \sys Architecture}
\label{sec:arch}
This section derives the transport requirements from each layer's assumptions and presents
the fabric that meets them.

\paragraph{From RV assumptions to transport requirements: the design argument.}
Distributing the hierarchy turns each layer's implicit input assumption into an explicit
transport requirement: a verdict is sound only if the monitor sees a \emph{complete} input
(at-least-once delivery), in \emph{order} (a gateway-anchored timeline), with
\emph{progress} under silence (a periodic tick), read \emph{independently} (per-consumer
cursors), and \emph{without loss under load} (flow control). \sys supplies exactly these
five (Sect.~\ref{sec:guarantees}); each names a way monitoring silently degrades under the
conditions an incident produces.

\begin{figure}[t]
\centering
\resizebox{\columnwidth}{!}{%
\begin{tikzpicture}[
  font=\normalsize,
  src/.style={draw, fill=white, rounded corners=2pt, align=center,
              minimum height=9mm, minimum width=22mm, inner sep=3pt},
  hub/.style={draw, fill=white, thick, rounded corners=2pt, align=center,
              minimum height=12mm, minimum width=17mm, inner sep=3pt},
  proc/.style={draw, fill=white, rounded corners=2pt, align=center,
               minimum height=12mm, minimum width=19mm, inner sep=3pt},
  arrow/.style={-Stealth, thick},
  flow/.style={font=\footnotesize\itshape, align=center},
  tier/.style={font=\footnotesize\bfseries},
  band/.style={rounded corners=3pt, inner sep=3.5mm}]
\node[src] (mc)               {Device\\(L1)};
\node[src, below=3mm of mc]   (sens) {Device\\(L1)};
\node[src, below=3mm of sens] (leg)  {Device\\(L1)};
\node[hub, right=12mm of sens] (mqtt) {MQTT\\(QoS\,1)};
\node[proc, right=10mm of mqtt] (gw) {Gateway\\L1/L2\\+ sidecar};
\node[hub, right=16mm of gw, minimum width=22mm] (nats) {NATS\\JetStream};
\node[proc, right=12mm of nats] (l3) {L3\\(MFOTL,\\cursors)};
\node[right=7mm of l3, align=center, font=\footnotesize] (out) {incidents\\+\,witnesses};
\draw[arrow] (mc.east)  -| (mqtt.north);
\draw[arrow] (sens.east) -- (mqtt.west);
\draw[arrow] (leg.east)  -| (mqtt.south);
\draw[arrow] (mqtt) -- node[flow, below] {ingest} (gw);
\draw[arrow] (gw)   -- node[flow, below] {verdicts} (nats);
\draw[arrow] (nats) -- node[flow, below] {read} (l3);
\draw[arrow] (l3)   -- (out);
\node[font=\scriptsize\itshape, gray!55!black, below=13mm of nats.south]
   (tick) {clock-tick 1\,Hz};
\draw[-Stealth, dashed, thin, gray!55!black] (tick) -- (nats.south);
\draw[-Stealth, dashed, thin, gray!55!black] (nats.south west) to[out=-140, in=-40]
   node[font=\scriptsize\itshape, gray!55!black, pos=0.5, below=1mm] {flow control} (gw.south east);
\begin{scope}[on background layer]
  \node[band, fill=black!4, fit=(mc)(sens)(leg)] (bdev) {};
  \node[band, fill=black!9, fit=(mqtt)(gw)]      (bgw)  {};
  \node[band, fill=black!4, fit=(nats)(l3)(out)] (bbe)  {};
\end{scope}
\node[tier, above=1mm of bdev.north] {Device tier};
\node[tier, above=1mm of bgw.north]  {Gateway (L1/L2)};
\node[tier, above=1mm of bbe.north]  {Backend (L3)};
\end{tikzpicture}}
\Description{Block diagram of the three-tier architecture read left to right. The Device tier holds three L1 device boxes publishing into an MQTT QoS 1 broker in the Gateway tier, which forwards ingest to a Gateway L1/L2 node with a timestamping sidecar. The gateway publishes verdict streams to a NATS JetStream broker in the Backend tier, read by an L3 monitor running MFOTL with durable cursors, which emits incidents and witnesses. A dashed 1 Hz clock-tick input feeds JetStream, and a dashed flow-control arrow runs back from JetStream to the gateway.}%
\caption{\sys: the three-layer RV hierarchy decoupled over a two-broker fabric.
Devices publish at QoS\,1 to a gateway-partitioned MQTT ingest (logically one broker per
gateway; the prototype uses per-gateway topic prefixes on one shared broker, equivalent
for routing but without broker-level fault isolation); the gateway runs L1/L2 and a
timestamping sidecar and re-publishes verdict streams to a backend NATS JetStream cluster,
from which the L3 fleet monitors consume via independent durable cursors. A 1\,Hz clock-tick subject drives the time-triggered silent-node
monitors, and flow control defers loss under transient overload.}
\label{fig:arch}
\end{figure}

\paragraph{MQTT for device-to-gateway ingest.}
The device-to-gateway hop is constrained and lossy, so \sys uses MQTT (Mosquitto,
QoS\,1)~\cite{mqtt,mosquitto}, which is ubiquitous on IoT and gives at-least-once delivery
with explicit PUBACKs; L1 is idempotent per $(\text{device},\text{window})$, so duplicates
are harmless.

\paragraph{NATS JetStream for gateway-to-backend delivery.}
The gateway-to-backend hop must fan a gateway's verdict stream to several
independent L3 monitors without a slow one blocking the others. \sys uses NATS
JetStream~\cite{nats}: per-consumer durable cursors let each L3 monitor read at
its own pace, and a dedicated 1\,Hz tick publisher writes to a JetStream subject
as the heartbeat that time-triggered monitors need. Verdicts are keyed by
per-device subject (\texttt{gw.$\langle$id$\rangle$.verdict}), so each device's stream
stays FIFO-ordered while a wildcard subscription (\texttt{gw.*.verdict}) fans a whole
gateway's devices into one monitor.

\paragraph{Timestamping sidecar.}
A sidecar co-located with each gateway stamps each event with a trusted, non-decreasing
gateway-\emph{ingest} time and orders by it, restoring at the first trust boundary the
ordering invariant L3 assumes. Device-reported time is retained only for
spoof detection, never for ordering; original device event time is not reconstructed.

\paragraph{Durable relay and deduplication.}
On receipt the gateway assigns each verdict a persistent event id
($\langle$gw, device, seq$\rangle$) and appends it to a local \texttt{fsync}ed write-ahead
outbox \emph{before} forwarding it to JetStream; once JetStream acknowledges the durable
publish, an \texttt{fsync}ed confirmation cursor advances, so a restart replays only the
un-acked suffix, bounding recovery cost. QoS\,1 with persistent sessions protects the
device-to-broker hop, the outbox the gateway-to-JetStream relay. Backend consumers
deduplicate by event id, so replays never double-count, giving exactly-once monitoring
\emph{effect} over an at-least-once transport without distributed
transactions~\cite{helland2007life}. A narrow residual window remains between delivery of
the brokered verdict to the relay and its local WAL \texttt{fsync}, where the verdict is
still only in memory; Sect.~\ref{sec:eval} injects a crash at each relay stage.

\paragraph{Properties the fabric carries.}
End-to-end, \sys carries a hierarchical security-property
catalogue~\cite{ares2026,csr2026}: single-node checks at L1/L2 and, at L3,
cross-device correlation (P3.1--P3.4) plus a time-triggered silent-node property
(P3.5). Briefly, P3.1 flags a multi-vector advanced persistent threat (APT) on one device, P3.2
the same attack from $\geq3$ devices, P3.3 an escalation over time, P3.4 a persistent
overflow campaign, and P3.5 a device gone silent, each sound only because a specific
transport guarantee holds (Sect.~\ref{sec:guarantees}). The
multi-gateway view also lets \sys express a property the single-gateway prototype
cannot observe: P3.6 flags an overflow campaign spanning $\geq2$ distinct
gateways,
\begin{equation}
P_{3.6}\;\equiv\;\big(\mathit{cnt}\leftarrow\mathsf{CNT}\,g;\;
\text{\small ONCE}_{[0,30]}\,\mathit{overflow}(g)\big)\;\wedge\;\mathit{cnt}\ge 2,
\label{eq:p36}
\end{equation}
i.e.\ at least two distinct gateways report an overflow within a 30\,s window.
A second fabric-enabled property, P3.7, fires when an entire gateway falls silent
while the fabric's tick keeps pulsing, distinguishing gateway-level stream loss
from the silence of an individual device:
$P_{3.7}(g)\equiv(\text{\small ONCE}\,\mathit{verdict}(g))\wedge
\neg\,\text{\small ONCE}_{[0,T_{\mathrm{gw}}]}\,\mathit{verdict}(g)$, evaluated at each tick.
P3.7 is a \emph{fabric-enabled availability} property rather than an intrinsically
cross-gateway one: it requires only a clock independent of the monitored stream (unlike
P3.6, which genuinely needs $\geq2$ gateway identities). That clock is a
\emph{backend-resident} tick publisher advancing monitoring time independently of every
gateway stream, so silence-liveness is by construction independent of any gateway remaining
up, where a per-gateway tick would freeze.

\section{Threat Model and Trust Assumptions}
\label{sec:threatmodel}
\sys is security-monitoring infrastructure, so its guarantees must be judged against an
explicit adversary, not benign-failure assumptions alone. This section states what \sys
defends, what it trusts, and which guarantees hold against which adversary, the threat
model the continuity mechanisms of Sect.~\ref{sec:guarantees} are designed against.

\paragraph{Asset and adversary goal.}
The asset \sys protects is the \emph{monitoring verdict stream}: every
security-relevant event durably accepted into the protected path must reach the
correlator, in the trusted gateway-ingest order, and the \emph{absence} of events must
itself be observable. The IoT payload is out of scope; \sys defends the evidence about
it, not the data. The adversary's goal is to make a
genuine attack go \emph{unflagged}, by corrupting the monitor's view
(drop/reorder/replay), \emph{blinding} it (silencing a device or gateway), or
\emph{exhausting} it (overload that sheds events), the silent-degradation modes of
Sect.~\ref{sec:intro} recast as adversarial objectives.

\paragraph{Trusted computing base.}
\sys trusts the cloud backend and L3 monitors, the two brokers under their configured
durability, and each gateway's timestamping sidecar and durable outbox, a
deliberately small TCB on the operator's side. Everything device-side of a gateway,
the devices and the network links, is untrusted; link cryptography (broker TLS and
authentication) is assumed, so the network adversary can manipulate traffic but cannot
forge a broker-authenticated identity.

\begin{table}[t]
\centering
\caption{Threat model: for each adversary capability, the \sys mitigation, the
continuity mechanism (G1--G5, Sect.~\ref{sec:guarantees}) or monitored property (P3.x)
it uses, and the residual risk. Entries marked $\dagger$ lie outside the trust boundary
(assumed, not enforced).}
\label{tab:threat}
\footnotesize
\setlength{\tabcolsep}{4pt}
\begin{tabular}{@{}R{2.6cm}R{4.0cm}R{1.0cm}R{3.0cm}@{}}
\toprule
\textbf{Adversary capability} & \textbf{\sys mitigation} & \textbf{Via (G/P)} & \textbf{Residual risk} \\
\midrule
\rowcolor{rowtint}
\multicolumn{4}{@{}l}{\emph{Network adversary (untrusted links)}}\\[1pt]
drop / delay events & at-least-once (QoS\,1 \& outbox); flow control absorbs transient backlog & G1, G5 & observable retention exhaustion, then loss \\
reorder events & monotone gateway timestamps re-anchor the timeline & G2 & cross-gateway skew bounded by offset $\epsilon$ (Sect.~\ref{sec:eval}) \\
replay events & event-id dedup (idempotent monitoring effect) & n/a & ID collision / sequence-state loss; ID forgery$^\dagger$ \\
\cmidrule{1-4}
\rowcolor{rowtint}
\multicolumn{4}{@{}l}{\emph{Device adversary}}\\[1pt]
emit malicious traffic & L1/L2 checks; fleet correlation & P3.1--P3.4 & novel behaviour outside the specifications \\
spoof its clock & sidecar ignores device time for ordering & G2 & device time trusted only for the spoof check \\
go silent (jam / DoS) & backend tick drives silent-node under total silence & G3, P3.5 & cause (fault vs.\ attack) not identified \\
\cmidrule{1-4}
\rowcolor{rowtint}
\multicolumn{4}{@{}l}{\emph{Gateway adversary}}\\[1pt]
taken dark & gateway-silence via the backend-resident tick & P3.7 & that fleet's in-flight evidence is lost \\
compromised$^\dagger$ & \emph{out of scope}: sidecar integrity trusted & n/a & full loss of that fleet's monitoring soundness \\
\bottomrule
\end{tabular}
\end{table}

\paragraph{Adversary treatment and scope.}
Table~\ref{tab:threat} maps each capability to its mitigation. The key property is that an
adversary who \emph{silences} a link cannot thereby hide the silence: P3.5/P3.7 turn
suppression into a positive detection, though evidence may be delayed during a prolonged
partition. The residual is behaviour that is malicious yet specification-conforming, the
general limit of any specification-based monitor. A gateway \emph{taken dark} is handled
(P3.7 plus outbox replay on recovery); an \emph{actively compromised} one is out of scope,
its sidecar and outbox lying in the TCB, so \sys bounds the blast radius by
keeping fleets disjoint (Sect.~\ref{sec:impl}).

\section{Continuity Mechanisms and Their Invariants}
\label{sec:guarantees}
This section presents the continuity mechanisms \sys establishes over the two brokers,
states each as a broker-contingent invariant tied to a specific shared-log failure, and
defines the verdict algebra coupling them to the output.

\sys provides five continuity mechanisms, which we label as \emph{guarantees}
\textbf{G1}~Delivery, \textbf{G2}~Order, \textbf{G3}~Tick, \textbf{G4}~Isolation, and
\textbf{G5}~Flow (the letter G stands for \emph{guarantee}).
We state each as an \emph{invariant} of the fabric, contingent on the durability
configuration of the components that supply it (Sect.~\ref{sec:impl}): their value is
the explicit correspondence, RV assumption $\to$ transport property $\to$ shared-log
failure closed.

Two carry the most weight. The \emph{clock-tick} is the subtlest: a time-triggered ``no
device has reported for $T_{\mathrm{dev}}$ seconds'' property cannot fire from a shared
log, since a log with no new lines produces no events to advance the monitor's clock, so a
1\,Hz tick supplies that liveness under total silence. \emph{Flow control} only
\emph{defers} loss by letting events queue: sustained overload beyond the configured
retention capacity, evaluated in Q7, makes exhaustion observable but cannot preserve G1
beyond capacity, so delivery is conditional on capacity.

\paragraph{Guarantees as invariants.}
Write $E_{\mathrm{brk}}$ for messages the ingest broker has durably accepted,
$E_{\mathrm{gw}}$ for those the gateway receives, $E_{\mathrm{wal}}$ for verdicts
\texttt{fsync}ed to the gateway WAL, and $\hat{E}$ for what the backend L3 monitor
observes; each verdict $v$ carries a gateway timestamp $\tau(v)$. The five guarantees
are fabric invariants:
\emph{(G1)}~at-least-once on two protected segments,
$E_{\mathrm{brk}}\subseteq E_{\mathrm{gw}}$ (persistent MQTT hop, given durable QoS\,1
acceptance under a persistent session) and $E_{\mathrm{wal}}\subseteq\hat{E}$
(gateway--backend relay), so nothing durably accepted is lost, while drops \emph{before}
durable acceptance (device crash, device-side queue, pre-WAL window) lie outside G1;
\emph{(G2)}~monotonicity, if $v_1$ is delivered before $v_2$ then
$\tau(v_1)\le\tau(v_2)$ for same-device verdicts;
\emph{(G3)}~liveness, a 1\,Hz tick drives the backend even when $\hat{E}=\emptyset$;
\emph{(G4)}~isolation, each consumer's cursor never gates the others; and
\emph{(G5)}~flow control, awaited publishes and retention queue transient overload
rather than dropping it. G1--G2 protect and re-order the monitoring view; G3--G5
enable resilient distributed operation.

\paragraph{Evidence-aware verdicts.}
The mechanisms above bound \emph{when} the input stream is trustworthy; a monitor that
keeps emitting a bare ``no violation'' after the stream is corrupted reproduces the silent
failure we set out to remove. We therefore make evidence completeness explicit in the
verdict semantics. Following three-valued RV semantics~\cite{bauer2011rvltl}, which admit
an \textsf{inconclusive} verdict when the observed trace is insufficient, a verdict is a
pair $V=(s,c)$ of a \emph{security} and an \emph{evidence-completeness} component,
\[
\begin{aligned}
  s &\in \{\mathsf{violation},\ \mathsf{no\_violation},\ \mathsf{unknown}\},\\
  c &\in \{\mathsf{sound} \succ \mathsf{degraded} \succ \mathsf{incomplete} \succ \mathsf{unavailable}\},
\end{aligned}
\]
where $c$ is derived from continuity evidence the fabric already tracks:
$c{=}\mathsf{sound}$ when every event on the protected segments is accounted for
(G1, no gap); $\mathsf{degraded}$ past a soft retention high-water mark (G5) before
any loss; $\mathsf{incomplete}$ when a sequence gap, retention overflow, or a
monotonicity violation on the consumed stream (G2) means events are provably missing or
misordered; and $\mathsf{unavailable}$ when the tick (G3) shows the
monitor live but no evidence is arriving. Crucially, $\mathsf{unavailable}$ is an
evidence-\emph{quality} tag, not a verdict: for a property whose violation \emph{is}
the absence of events (the silent-node P3.5 and gateway-silence P3.7), the tick
reliably witnesses that absence, so it yields $(\mathsf{violation},\mathsf{sound})$;
$\mathsf{unavailable}$ applies only to properties needing positive event evidence.

The two components interact by one rule that keeps security and completeness as
\emph{separate} axes rather than as one ad hoc lattice. For the monotone witness
properties formalised in Prop.~\ref{prop:soundness}, a \emph{positive} finding backed
by an intact local witness is self-evidencing (the witness \emph{is} the evidence) and
remains valid under \emph{unrelated} missing evidence, so $(\mathsf{violation},c)$
stands for any $c$; a \emph{negative} finding is only as good as its input, so a
$\mathsf{no\_violation}$ over non-\textsf{sound} evidence is downgraded,
$(\mathsf{no\_violation}, c)\mapsto(\mathsf{unknown}, c)$ for $c\prec\mathsf{sound}$,
stopping a corrupted stream from yielding an unqualified all-clear.
At L3 a property spanning gateways $g_1,\dots,g_n$ carries the meet of their evidence,
$c_{\mathrm{L3}}=\min_{\preceq}\{c_{g_1},\dots,c_{g_n}\}$, so one incomplete gateway makes
a fleet-wide all-clear $\mathsf{unknown}$. We implement all four tags over the real
JetStream (event-id continuity for gaps, a consumed-stream monotonicity check for order
violations, the retention watermark for $\mathsf{degraded}$, tick-versus-evidence for
$\mathsf{unavailable}$) and confirm each on a canonical scenario: a normal stream stays
\textsf{sound}, an $85\%$ retention high-water mark is \textsf{degraded}, an event-id gap
is \textsf{incomplete} (downgrading \textsf{no\_violation} to \textsf{unknown}), and a
required stream silent under a live tick is \textsf{unavailable}.

\paragraph{Soundness of the verdict algebra.}
The downgrade rule is not merely a convention: it makes the algebra \emph{sound} for a
well-defined and practically dominant class of properties. Call $\varphi$ a
\emph{monotone witness property} if its satisfaction over a window is certified by a
finite set of events and preserved under supersets, i.e.\ for traces
$\hat\sigma\subseteq\sigma$, $\hat\sigma\models\varphi\Rightarrow\sigma\models\varphi$.
The counting and multi-vector L3 properties P3.1--P3.4 and P3.6 (existential
\textsc{once}-formulas over device or gateway events) are of this class. The
absence-based P3.5/P3.7 are dual: G3 gives their liveness while G1/G5 rule out
loss-induced false alarms; we treat the monotone class here. Let $\sigma$ be the durably accepted event stream for the monitored identities
over the window (events that reached the protected path; device-side and pre-WAL losses
lie outside G1 and hence outside this result), and $\hat\sigma=\mathrm{dedup}(\hat E)$
the stream L3 observes after event-id deduplication.

\begin{proposition}[Conditional soundness of verdicts]
\label{prop:soundness}
Assume the order guarantee G2 (so $\hat\sigma$ is order-faithful) and, for the
cross-gateway property P3.6, a bounded cross-gateway skew $\epsilon<W$ for the P3.6 correlation window $W$ ($30$\,s); the hypothesis
$c=\mathsf{sound}$ used below itself presupposes G1, G3, and G5 over the window. For any
monotone witness property $\varphi$ and emitted verdict $V=(s,c)$:
\emph{(i)~Positive soundness.} If $s=\mathsf{violation}$ then $\sigma\models\varphi$:
with the deduplication state intact, no loss or overload manufactures a false violation.
\emph{(ii)~Negative soundness under completeness.} If $s=\mathsf{no\_violation}$ and
$c=\mathsf{sound}$ then $\sigma\not\models\varphi$; a sound all-clear agrees with the
fault-free oracle over the accepted stream.
\emph{(iii)~No silent false all-clear.} If $\sigma\models\varphi$ but the monitor would
emit $\mathsf{no\_violation}$ (a missed violation), and the loss is observable in the
evidence stream, as an event-id discontinuity (G1/G5) or an order violation on the
consumed stream (G2), then $c\prec\mathsf{sound}$, so the algebra emits
$(\mathsf{unknown},c)$ rather than an unqualified all-clear.
\end{proposition}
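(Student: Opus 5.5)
The plan is to reduce all three parts to one structural fact about what L3 observes: after deduplication, $\hat\sigma$ is a sub-trace of $\sigma$, and it equals $\sigma$ whenever $c=\mathsf{sound}$. Once that is in place, each part follows from monotonicity of $\varphi$ together with the downgrade rule.

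\textbf{Step 1: inclusion $\hat\sigma\subseteq\sigma$.} This holds whenever the deduplication state is intact. Every observed event carries an event id $\langle\mathrm{gw},\mathrm{device},\mathrm{seq}\rangle$ assigned by a trusted gateway. By the threat model, forged ids lie outside the trust boundary and link authentication prevents injection. Hence every element of $\hat E$ originates from a durably accepted event, and replays from QoS\,1 or outbox recovery collapse under $\mathrm{dedup}$ to a single copy. Order faithfulness comes from G2: within a device the delivery order matches $\tau$. Across gateways, the skew bound $\epsilon<W$ ensures that any set of events that jointly falls in a window of $\sigma$ also does so in $\hat\sigma$ for P3.6. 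I expect this window-preservation step to need care. The claim is that a witness set is preserved as a timed witness, not just as a set. I would argue it by noting that timestamps in $\hat\sigma$ are exactly the gateway timestamps $\tau$ of the same events. Only the cross-gateway relation can shift, by at most $\epsilon$, and the assumption $\epsilon<W$ is what absorbs that shift.

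\textbf{Step 2: equality $\hat\sigma=\sigma$ when $c=\mathsf{sound}$.} By the definition of $c$, $\mathsf{sound}$ means there is no event-id gap, no retention overflow, and no order violation over the window. The hypothesis says this presupposes G1, G3 and G5. Then G1 gives $E_{\mathrm{brk}}\subseteq E_{\mathrm{gw}}$ and $E_{\mathrm{wal}}\subseteq\hat E$. G5 with no overflow means nothing was evicted. Continuity of per-device sequence numbers means every id up to the last observed one is present. G3 ensures the window was actually closed by time progress rather than stalled. Combined with Step 1, this yields $\hat\sigma=\sigma$.

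\textbf{Part (i).} Let $s=\mathsf{violation}$. The monitor evaluates $\varphi$ on $\hat\sigma$, so $\hat\sigma\models\varphi$ holds, using engine correctness (MonPoly on the given trace). By Step 1 and monotonicity of $\varphi$, $\sigma\models\varphi$. Loss and overload only shrink $\hat\sigma$, so they cannot create a witness.

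\textbf{Part (ii).} Let $s=\mathsf{no\_violation}$ and $c=\mathsf{sound}$. By the downgrade rule the pair was not rewritten, so the monitor computed $\hat\sigma\not\models\varphi$. By Step 2, $\hat\sigma=\sigma$, hence $\sigma\not\models\varphi$.

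\textbf{Part (iii).} This is the contrapositive of (ii) combined with the definition of $c$. The stated loss is observable, as an id discontinuity or a consumed-stream order violation. By the definition of the evidence component, either observation sets $c\preceq\mathsf{incomplete}\prec\mathsf{sound}$. At L3 the meet over gateways preserves this, since $\min_\preceq$ is at most any component. The downgrade rule then maps the would-be $\mathsf{no\_violation}$ to $(\mathsf{unknown},c)$.

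\textbf{Main obstacle.} The hard part is Step 2's converse direction: showing that the absence of every detectable anomaly really implies completeness. The concern is a suffix loss with no later event to expose the gap. My plan is to rely on G3 and the P3.5/P3.7 tick together with the end-of-window accounting built into $c=\mathsf{sound}$ (``every event on the protected segments is accounted for''). I would make the proof explicitly conditional on that definition rather than attempt to derive it.
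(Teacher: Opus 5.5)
Your proposal is correct and is essentially the paper's own argument. It uses deduplication-based inclusion $\hat\sigma\subseteq\sigma$ plus monotonicity (with G2 and $\epsilon<W$ keeping the \textsc{once}-windows aligned) for (i), and the certifications bundled into $c=\mathsf{sound}$ to get $\hat\sigma=\sigma$ on the window for (ii). Part (iii) then follows from the definition of $c$ together with the downgrade rule. The paper, like you, settles the suffix-loss worry by definition (``every event on the protected segments is accounted for''), so making (ii) explicitly conditional on that definition is the right call.
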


\begin{proof}[sketch]
\emph{(i)}~By intact event-id deduplication (exactly-once \emph{effect}),
$\hat\sigma\subseteq\sigma$: redelivery cannot inflate a count and loss only removes
events, while G2 keeps the \textsc{once}-windows aligned (for P3.6, $\epsilon<W$ aligns the
cross-gateway window). A satisfying witness $\Omega\subseteq\hat\sigma$ gives
$\Omega\subseteq\sigma$, hence $\sigma\models\varphi$.
\emph{(ii)}~$c=\mathsf{sound}$ certifies, over the window, no event-id gap on the
protected segments (G1: no delivery loss), backlog below the high-water mark (G5: no
eviction), and no order violation (G2), so $\hat\sigma$ equals $\sigma$ restricted to
the window's relevant events and
$\hat\sigma\not\models\varphi\Rightarrow\sigma\not\models\varphi$.
\emph{(iii)}~Contrapositive of (ii): a missed violation means the accepted witness was
lost or misordered in $\hat\sigma$; broken event-id continuity, or an order violation
on the consumed stream, each set $c=\mathsf{incomplete}\prec\mathsf{sound}$, triggering
$(\mathsf{no\_violation},c)\mapsto(\mathsf{unknown},c)$.
\qed
\end{proof}

The dual absence-based P3.5/P3.7 obtain the analogous guarantee through the tick, which
times the silence G1/G5 show to be real. Clause~(iii) is \emph{conditional on the loss being
observable}, and that condition is necessary rather than incidental: event-id continuity
makes delivery loss observable, a monotonicity check on the consumed stream makes
reordering observable, and the tick makes silence observable. A transport carrying none of
them (the shared log) cannot lower $c$, so every missed violation there is a silent false
all-clear; equally, without the tick a genuine silence is unobservable, the one loss class
the algebra cannot flag. The oracle experiment (Q8, Sect.~\ref{sec:eval}) validates all
three clauses empirically.

\section{Implementation}
\label{sec:impl}
This section details how the architecture of Sect.~\ref{sec:arch} is realised: the two
brokers, the naming scheme routing events, and the mapping onto the continuum.

\paragraph{Brokers and subjects.}
Each gateway ingests device telemetry over MQTT (Mosquitto~\cite{mosquitto}) at QoS\,1
on topics \texttt{dev/$\langle$gw$\rangle$/$\langle$id$\rangle$/evt}; the two-gateway
testbed shares one broker with per-gateway topic prefixes (a dedicated broker per gateway
is a drop-in alternative), so a gateway consumes only the fleet homed to it. Its L1/L2
consumer re-publishes per-window verdicts to a backend NATS JetStream~\cite{nats} stream,
routed by \emph{subject}: one subject per device, plus a fleet-wide \texttt{fleet.tick}
subject carrying the 1\,Hz liveness heartbeat.

\paragraph{Monitors.}
\sys is deliberately monitor-agnostic: it reuses the L1/L2/L3 monitors of the
hierarchical design under study~\cite{ares2026,csr2026} unmodified. L1/L2 are
lightweight; L3 is the hybrid of MonPoly~\cite{monpoly} for metric first-order
correlation and RTLola~\cite{rtlola} for the time-triggered silent-node property, the
latter driven by the fabric's 1\,Hz tick subject. Because only the transport beneath them
changes, the specification logic is unchanged, anchored to the gateway-ingest
timeline, so an operator adopts \sys without rewriting specifications.

\paragraph{Deployment topology.}
L1 runs at the edge; each gateway node runs an L1/L2 consumer and a sidecar over a
disjoint fleet, so gateways fail independently; a cloud
backend runs JetStream and the L3 monitors. The testbed instantiates two gateway nodes of
four emulated devices each, and scales out by adding nodes.

\paragraph{Reproducibility.}
\emph{Host.} The containerised testbed runs on a single laptop, an Apple M4 Pro
MacBook Pro (12 cores: 8 performance + 4 efficiency; 24\,GB RAM; macOS~15.3), with
Docker~28.1.1. The two-host clock-skew experiment
(Sect.~\ref{sec:eval}) instead uses two separate cloud VMs
(1\,vCPU, Ubuntu~24.04, independent kernels, NTP disabled on one) to obtain real
cross-host skew.
\emph{Software.} The stack comprises Mosquitto~2 (MQTT), NATS~2 JetStream, the
Python gateway/sidecar/relay, and an L3 image that builds MonPoly (OCaml) and
RTLola (Rust) from source; emulated devices replay the workload traces. Everything is
wired with Docker Compose and a single command brings up the fabric plus a
coordinated-attack overlay. Specifications, compose
files, and monitor wiring are available as a public
artefact.\footnote{\url{https://github.com/nikos-kekatos/resilient-rv-fabric-edge-iot}}

\section{Evaluation on the Controlled Testbed}
\label{sec:eval}
This section evaluates whether \sys preserves its five continuity mechanisms under
fault injection and at fleet scale, and quantifies the latency and recovery cost of
doing so. The mechanisms are the means; what is ultimately under test is evidence completeness, so
the sequence closes with Q8, asking whether the security verdicts themselves survive and
whether those that do not are at least reported as incomplete rather than as an
all-clear.

\paragraph{Testbed and methodology.}
We deploy \sys on a multi-gateway testbed: $G$ gateways each serving a share of $N$
emulated devices that replay a documented attack and benign
workload~\cite{ares2026,csr2026}, brokers first colocated and then separated by an emulated
wide-area link, all containerised on the host detailed under Reproducibility above.
Detection figures are means over five runs; per-event latency (Q2) figures are means and
p95s over
a paced sample run per link setting, with WAN delay injected via Linux \texttt{netem}. We ask eight questions, each naming the guarantee (G1--G5) or monitored property (P3.x)
it targets: (Q1)~\emph{correctness}, does the fabric close the documented gaps (G1--G3);
(Q2)~\emph{overhead}, the per-event latency added; (Q3)~\emph{throughput}, the sustained
message rate (G4--G5); (Q4)~\emph{detection} of cross-tier, cross-gateway and silent-node
attacks (P3.1--P3.7); (Q5)~\emph{ablation}, the failure removing each mechanism
reintroduces (G1--G5); (Q6)~\emph{crash tolerance} at each relay stage (G1);
(Q7)~\emph{overload horizon} before continuity is lost (G5); and (Q8)~\emph{verdict
preservation} against a fault-free oracle (G1--G5).

\paragraph{Where the fabric differs (Q1).}
We compare \sys against the single-host shared log (the same hierarchy \emph{without} the
broker fabric), replaying one fixed $7000$-event trace through both at matched load. For
steady-state ingest the two are indistinguishable: \emph{both} deliver every event to L1
($0$ loss, at $500$ and $3000$\,msg/s, five trials each).
The $\sim$1\% ingest loss reported for the earlier prototype~\cite{ares2026} did not
reproduce, so it is condition-specific and not where the fabric differs. It differs on
two properties the evaluated shared-log design lacks. \emph{Ordering}: on the
same $1200$-verdict stream, device-time ordering leaves $76$ out of order under the
time-spoof profile, the gateway sidecar $0$. \emph{Silence-liveness}: under total silence the shared log yields no events so P3.5 never
fires, whereas the 1\,Hz tick fires it within one tick of the $5$\,s deadline.

\paragraph{Per-event latency under a wide-area link (Q2).}
We place the fabric clients behind an emulated wide-area link (Linux \texttt{netem},
one-way delay $\delta$ per hop) and measure publish-to-consume latency per hop (mean/p95
in ms, paced 400-sample run). Colocated, both hops are sub-millisecond in the mean ($0.5/1.0$ MQTT,
$0.9/1.8$ JetStream). They scale differently: the MQTT hop grows as ${\approx}3\delta$
because QoS\,1's acknowledged handshake crosses the link several times (mean $17/32/60$\,ms
at $\delta{=}5/10/20$\,ms), the JetStream hop as ${\approx}\delta$ ($7/13/23$\,ms). Even at
$\delta{=}20$\,ms, p95 stays under $75$\,ms, over an order of magnitude below the
second-scale L3 detection latency, so transport was not the bottleneck and QoS\,1's
steeper WAN cost buys at-least-once delivery on the lossy edge hop.

\paragraph{Throughput and fleet scale (Q3).}
A saturation sweep (one durable consumer, rising producer count) climbs from
$\sim$10.7K to a $\sim$14K\,msg/s plateau at zero loss; past the drain rate, offered
load queues rather than drops (p95 rising to $\sim$1\,s: flow control defers loss).
Emulated fleets of $250/500/1000$ devices ($\approx$2\,msg/s each) all sustain
\emph{zero loss}; at $1000$ devices the $\approx$1930\,msg/s aggregate runs at
$\sim$60\,ms p95. This validates the message-handling path at fleet scale, not the
resource limits of $1000$ physical devices; read capacity scales with more L3 consumers.

\paragraph{Detection (Q4).}
An eight-device, two-gateway workload (mixed overflow/APT/spam/time-spoof/clean) over
five $150$\,s runs fired all six properties P3.1--P3.6 on every run ($41.6\pm2.1$
incidents), including the fabric-dependent ones: cross-device \textbf{P3.2} and
cross-gateway \textbf{P3.6} need the backend fleet view and per-consumer cursors (P3.6 is
unobservable single-gateway), and time-triggered \textbf{P3.5} surfaces only because the
tick drives polling under silence. Taking one gateway dark mid-run, \textbf{P3.7} flags
gateway-stream loss within $T_{\mathrm{gw}}{=}10$\,s, distinct from a single device
going quiet, though not the cause.

\paragraph{Clock-skew resilience of P3.6 (Q4, two independent hosts).}
We validate P3.6's skew resilience on \emph{two independent machines} (cloud VMs, separate
kernels, real network, NTP disabled on gw2 so its clock diverges genuinely). A per-gateway
clock-beacon measures the realised offset $\epsilon$ against a backend reference, with a
zero-skew control at $\epsilon{=}0.001$\,s. Sweeping gw2's injected offset over
$0/5/10/20/35$\,s, the measured $\epsilon$ tracks it to within network jitter
($0.001/4.75/9.71/19.64/33.76$\,s), and P3.6 (``$\geq2$ gateways within $W{=}30$\,s'')
stays sound while $\epsilon<W$ and flips once $\epsilon$ exceeds the window, as expected
for a windowed property. A real cross-region WAN and correlated multi-gateway failure
remain future work (Sect.~\ref{sec:limits}).

\paragraph{Failure mode on mechanism removal (Q5).}
Each guarantee is characterised by the failure its removal reintroduces
(Table~\ref{tab:ablation}), rather than by a single scalar. Consumer isolation (G4) we
measure directly: a fast L3 consumer drains $2000$ verdicts in $0.043\pm0.001$\,s (five
runs) while a slow peer ($20$\,ms/verdict) on its own durable cursor has processed only
$3$, a $\sim$940$\times$ isolation against the $40$\,s a single shared cursor would
impose by head-of-line blocking. The other four are quantified elsewhere (Table~\ref{tab:ablation}). Q8 then carries each
of these transport-level failures one level up, to the security incidents each disabled
mechanism causes to be missed or silently mis-reported.

\begin{table}[t]
\centering
\caption{Per-mechanism ablation: the failure each continuity mechanism's removal
reintroduces, and the measured effect (with vs.\ without the mechanism). Numbers are
from the experiments cross-referenced in the last column.}
\label{tab:ablation}
\footnotesize
\setlength{\tabcolsep}{4pt}
\begin{tabular}{@{}R{2.0cm}R{4.1cm}R{4.6cm}@{}}
\toprule
\textbf{Mechanism} & \textbf{Failure on removal} & \textbf{Measured effect (with $\to$ without)} \\
\midrule
\textbf{G1}~Delivery & crash / overload loses un-acked evidence & $1$ in-flight event lost if the crash precedes WAL \texttt{fsync}, $0$ after; without the outbox the whole un-acked suffix is at risk (Q6) \\
\textbf{G2}~Order & metric monitors see a reordered stream & $0\to76$ of $1200$ timestamp-order violations under time-spoof (Q1) \\
\textbf{G3}~Tick & silent-node never fires under total silence & P3.5 fires within one tick $\to$ $0$ pulses / no firing when the tick is absent (Q1) \\
\textbf{G4}~Isolation & one slow consumer stalls all (head-of-line) & fast drain $0.043$\,s $\to$ $40$\,s shared cursor ($\sim$940$\times$) (Q5) \\
\textbf{G5}~Flow & transient overload dropped, not queued & discard-new surfaces every drop $\to$ discard-old silently loses $1000/2000$ (Q7) \\
\bottomrule
\end{tabular}
\end{table}

\paragraph{Crash recovery of the durable outbox (Q6).}
We inject a hard crash (\texttt{SIGKILL}-equivalent) into the gateway relay at each of
its three stages while it relays $50$ verdicts (crash on \#25), then restart and read the
whole stream through backend deduplication. Crashing \emph{before} the local WAL
\texttt{fsync} (S1), while the verdict is still only in memory, loses that one in-flight
verdict, the residual window of Sect.~\ref{sec:arch}. Crashing \emph{after} it loses
nothing: persisted-but-unacknowledged (S2) and acknowledged-but-cursor-not-advanced (S3)
both lose $0$, S3 re-publishing one duplicate that deduplication collapses to $0$
double-counted incidents (exactly-once monitoring \emph{effect}).
Recovery is linear in the backlog: a lost cursor forcing a full replay takes
$8.8$/$95$/$474$\,ms for $100$/$1000$/$5000$ entries ($\approx$0.095\,ms/entry).

\paragraph{Overload buffer horizon (Q7).}
Flow control defers loss only while the backend buffer holds; we measure what happens
beyond it. Filling a bounded JetStream stream ($C{=}1000$, pushed to $2C$ with no
consumer) under the two discard policies shows the operational choice sharply: default
\emph{discard-old} silently evicts $1000$ of $2000$ verdicts with $0$ publish errors,
whereas \emph{discard-new} raises $1000$ explicit publish errors, making the same loss
\emph{observable}. Crossing the soft high-water mark marks the stream \textsf{degraded};
the first rejected or evicted publication marks it \textsf{incomplete}, consistent with
the algebra of Sect.~\ref{sec:guarantees} (retaining or retrying the rejected event is
future work, Sect.~\ref{sec:limits}). The buffer absorbs an outage of
$T_{\text{buffer}}=C/R_{\text{in}}$: at $\approx$1930\,msg/s, a million-message buffer
covers $\approx$8.6\,min.

\paragraph{Verdict preservation against a fault-free oracle (Q8).}
The ablation (Q5) shows that removing each mechanism reintroduces a transport-level
failure; this experiment measures the consequence one level up, on the
\emph{security verdicts} themselves. We fix a seeded workload that triggers all seven
properties P3.1--P3.7 and take the incident set the \emph{unmodified} MonPoly engine
(with tick-driven reference detectors for the liveness properties P3.5/P3.7) produces
from the fault-free stream as the ground-truth \emph{oracle} $I^\star$
($|I^\star|{=}7$). We then inject one seeded fault campaign, replayed as an identical alert sequence under the shared log,
the full fabric, and the fabric with each guarantee removed, its faults realised by the
measured behaviour of Q6--Q7 and each bound to one incident: a pre-\texttt{fsync} crash
loses \texttt{e1}'s fifth overflow (P3.4's count $5\!\to\!4$, G1); \texttt{c1}'s overflow
is transposed after its time-anomaly, breaking the P3.1 window (G2); \emph{discard-old}
evicts backlog, dropping the P3.2 botnet count (G5); node \texttt{f1} and a third gateway
\texttt{gw3} go silent at fixed offsets for P3.5/P3.7 (G3); and the P3.6 campaign spans
two gateways. The P3.3 escalation on \texttt{d1} is a device-local \emph{control} no fault
touches, and is the single incident the shared log preserves. Since each incident is hit
by at most one fault, the one-miss-per-mechanism structure is a property of the seeded
campaign, not per-configuration tailoring; ids, timestamps, and bindings are pinned in the
artefact (\texttt{exp\_oracle.py}). The only variable is which alerts, and in what order,
reach the unmodified detectors, so any divergence from $I^\star$ is attributable to the
transport. This is a controlled causal experiment across the seven property classes, not
a statistically representative benchmark.
Each missed incident is labelled by the algebra of Sect.~\ref{sec:guarantees} from the
state a consumer observes: a miss downgraded to \textsf{incomplete} surfaces as an honest
\emph{unknown}, one it cannot flag as a \emph{false all-clear}.

Table~\ref{tab:oracle} reports the outcome. Every one of the shared log's six misses is a
silent false all-clear: it cannot express the gateway-scoped P3.6/P3.7, has no tick for
the silent-node P3.5, and drops the crash-, reorder-, and overload-hit incidents with no
signal an operator could act on. The full fabric absorbs every fault and so reproduces the oracle by
construction; the informative rows are the ablations, where each
single-mechanism removal bar G4, whose cost is timeliness, reintroduces one class of miss:
direct causal evidence that the mechanism, not incidental engineering, preserves the
verdict. The central result is the last column. Removing G1, G2, or G5 leaves the evidence
signal intact (event ids for the G1/G5 losses, the monotonicity check for the G2
reordering), so the miss is \emph{downgraded} to a flagged \textsf{unknown} rather than
reported as a clean all-clear. G3 is the exception: the tick is the very liveness clock
the completeness status is computed against, so removing it makes the two silence
incidents unflaggable and they revert to false all-clears, the one loss class the algebra
cannot surface. G4 preserves the incident \emph{set} but not delivery \emph{latency}, quantified above.

\begin{table}[t]
\centering
\caption{Q8: fault-injected verdict preservation against the fault-free oracle
($|I^\star|{=}7$ incidents spanning P3.1--P3.7). One combined fault campaign (crash,
reorder, silence, overload) is replayed at the alert level under each configuration; incident sets are
computed by the \emph{unmodified} MonPoly engine plus tick-driven reference detectors.
A \emph{false all-clear} is a missed incident the evidence algebra cannot
flag: a silent \textsf{no\_violation}; a \emph{downgraded} miss is instead surfaced
as \textsf{incomplete}.}
\label{tab:oracle}
\footnotesize
\setlength{\tabcolsep}{5pt}
\begin{tabular}{@{}R{2.9cm}ccR{1.7cm}R{2.7cm}@{}}
\toprule
\textbf{Configuration} & \textbf{Preserved} & \textbf{Missed} & \textbf{False all-clears} & \textbf{Downgraded to unknown} \\
\midrule
\rowcolor{rowtint}
\multicolumn{5}{@{}l}{\emph{Baselines}} \\
Shared log $+$ faults & $\mathbf{1/7}$ & $6$ & $\mathbf{6}$ & $0$ \\
\sys (full)           & $\mathbf{7/7}$ & $0$ & $\mathbf{0}$ & $0$ \\
\cmidrule{1-5}
\rowcolor{rowtint}
\multicolumn{5}{@{}l}{\emph{Single-mechanism ablations ($-$G$_i$)}} \\
$-$G1 Delivery & $6/7$ & $1$ & $0$ & $1$ (eid gap) \\
$-$G2 Order      & $6/7$ & $1$ & $0$ & $1$ (order viol.) \\
$-$G3 Tick       & $5/7$ & $2$ & $2$ & $0$ \\
$-$G4 Isolation  & $7/7$ & $0$ & $0$ & $0$ (timeliness) \\
$-$G5 Flow       & $6/7$ & $1$ & $0$ & $1$ (eid gap) \\
\bottomrule
\end{tabular}
\end{table}

\paragraph{Independent re-execution (reproducibility).}
Re-running the public implementation on a \emph{fresh single host} with native brokers
reproduces the qualitative claims with $95\%$ CIs: $0\%$ loss at $250/500/1000$ devices
($1997\pm20$\,msg/s, $p95$ $121\pm6$\,ms at $1000$), saturation near $7$k\,msg/s, isolation
$\sim$208$\times$, and the crash result \emph{exactly}. Absolute figures are host-dependent; the re-execution establishes
reproducibility of the \emph{behaviour}, not of the numbers.

\paragraph{Capacity bounds for scale-out.}
The testbed is lab-scale, so we state the measured figures as \emph{bounds a larger
deployment must respect} rather than as a scale claim. Per durable stream, ingest is
bounded by the publish ceiling ($\approx$10.7\,K\,msg/s single producer, rising to a
$\approx$14\,K\,msg/s plateau at zero loss); past the drain rate the offered load queues
rather than drops, so exceeding the bound costs latency ($p95$ $\sim$1\,s) before it costs
evidence. A fleet is sized against that ceiling: the $1000$-device workload
($\approx$1930\,msg/s) uses under a fifth of it, and read capacity scales independently
with additional L3 consumers. Tolerance to a backend outage is
$T_{\text{buffer}}=C/R_{\text{in}}$, and resuming after a gateway crash costs
$\approx$0.095\,ms per un-acked entry ($474$\,ms for a $5000$-entry replay), so both buffer
capacity and recovery time follow from the fleet rate and the backlog a deployment
permits. These bounds are per-stream and per-gateway and therefore compose across
gateways; what a single host cannot establish is behaviour under a real cross-region WAN
and under correlated multi-gateway partition, which is why the ordering guarantee is
validated across two independent machines above (Q4) and why the two remaining threats
are named explicitly in Sect.~\ref{sec:limits}.

\section{Real-Data Validation: Water-SCADA and IoT/IIoT Datasets}
\label{sec:casestudy}
We validate the fabric on real critical-infrastructure data, replaying two openly published
datasets, WUSTL-IIoT-2021~\cite{wustl2021} (water-supply SCADA) and
TON\_IoT~\cite{moustafa2021toniot,alsaedi2020toniot} (IoT/IIoT), through the real MonPoly
L3 engine. \emph{Scope.} The question is transport, not detection
accuracy: whether real security-event streams can be carried and correlated end-to-end
while preserving the RV assumptions of completeness, ordering, and liveness. These are
flow logs, not distributed deployments, so they exercise detection through the unmodified
engine, not the mechanisms G1--G5; thresholds are fixed a~priori and untuned, and as
neither dataset records a gateway assignment the cross-gateway property runs over a
\emph{synthetic} partition (Sect.~\ref{sec:limits}), each identity hashed to a gateway independent of
label and attack class, so it cannot have been tuned for cross-gateway firings.

Because the device-payload predicates of Sect.~\ref{sec:eval} target application misuse
rather than network-flow malware, we add three flow monitors over \emph{failed}
connections (Zeek \texttt{conn\_state}${=}$\texttt{S0} or Argus \texttt{DstPkts}${=}0$),
applied unchanged to both datasets: \textbf{port\_scan} ($\geq20$ failed destinations in
$60$\,s), \textbf{ddos\_flood} ($\geq100$ connections to one destination in $10$\,s), and
\textbf{c2\_beacon} ($\geq50$ periodic retries, CV${<}0.5$). These feed L3 as
predicate-parametric schemas: $P3.2(A)$ fires when $\geq3$ source identities satisfy alert
$A$ in the window, $P3.6(A)$ when $\geq2$ gateways do; detection is scored per identity.

Both datasets carry and correlate end-to-end. WUSTL-IIoT-2021 is a water-SCADA Modbus
trace ($\approx$1\,M Argus flows, $6$ of $14$ source identities attacking); TON\_IoT a
UNSW cyber-range trace ($461$k records, $19$ of $11{,}536$ identities attacking over nine
classes, stressing correlation over many identities). Per-identity detection
(precision/recall, untuned) is $0.83/0.83$ (WUSTL) and $0.64/0.47$ (TON\_IoT); dedicated
detectors (e.g.\ BATADAL~\cite{taormina2018battle}) score higher. Driving the \emph{real}
MonPoly engine, the coordinated \textbf{P3.2} schema fires $17$/$11$ episodes
(WUSTL/TON\_IoT) and the cross-gateway \textbf{P3.6} is satisfied continuously on WUSTL and
in $168$ episodes on TON\_IoT, at up to four coordinated sources across two gateways,
validating the fleet properties on labelled multi-source traffic.
Application-layer classes (injection, XSS, ransomware, MITM) need content-level
specifications and are out of scope. These counts are \emph{episodes} (consecutive
satisfaction timepoints collapsed), so they are distinct coordinated events rather than
per-timepoint alerts. P3.6's count depends on how the synthetic
partition places identities across gateways and should be read as partition-specific;
P3.2, quantifying over devices, does not.

\section{Related Work}
\label{sec:related}
\sys builds directly on two recent hierarchical RV designs for edge-IoT security: the
first introduces the three-layer hierarchy and its hybrid MonPoly/RTLola backend for
fleet-wide temporal correlation~\cite{ares2026}, the second the formal-methods
specification of the layered security properties~\cite{csr2026}. Both are validated as
single-host prototypes reading a shared log, leaving delivery, and thus resilience, out
of scope. This paper supplies that layer, carrying the same unmodified hierarchy across
the edge-cloud continuum and making evidence completeness part of each verdict.

Measured against the five guarantees, the shared-log RV
prototype~\cite{ares2026,csr2026} and MQTT-only monitoring lack durable delivery, flow
control, independent consumers, a silence-aware clock and trusted ordering, while generic stream
processors supply durability and isolation but no RV-specific ordering and no
critical-infrastructure threat model (Sect.~\ref{sec:threatmodel}).

\paragraph{Hierarchical and edge-based RV.}
Closest in setting is Tsigkanos et al.~\cite{tsigkanos2022edge}, who place MFOTL monitors
on edge nodes in an L0/L1 hierarchy and propagate verdicts between them, adapting MonPoly
for online use. We agree on where monitors belong and differ on what may be assumed
beneath them: their monitors exchange verdicts as REST payloads, and handling lost events
is named as something that ``can be integrated'' rather than provided. \sys supplies that
layer and makes the resulting completeness part of the verdict.
STREL~\cite{strel2017} and SaSTL~\cite{sastl2020} formalise hierarchical and spatially
distributed monitoring, and the hierarchy \sys carries reuses that
pattern~\cite{ares2026,csr2026}; that line develops the \emph{logic} of multi-tier
monitoring, where \sys addresses how such a hierarchy is \emph{deployed} across a real
network.

\paragraph{Distributed, decentralised, and fault-tolerant RV.}
RV at scale has been studied at the algorithm level, in stream-based specification
languages, and in distributed and decentralised
monitoring~\cite{rvsurvey,bauer2012decentralised}, and applied to CPS co-simulation and
digital-twin pipelines~\cite{temperekidis2022fmirv,temperekidis2022dt} and to the
tool-call traces of {LLM} agents~\cite{agentrv2026}. Closest is the
fault-tolerant line: Fraigniaud et al.\ characterise the opinions a fault-tolerant
distributed monitor needs~\cite{fraigniaud2014opinions}, Bonakdarpour et al.\ give
crash-resilient decentralised algorithms~\cite{bonakdarpour2016decentralized}, and Basin
et al.\ monitor MFOTL over \emph{out-of-order} streams~\cite{basin2017outoforder}. That
work makes the monitoring \emph{logic} fault- or reordering-tolerant while abstracting
transport; \sys sits below it, making the delivery layer's loss, ordering and liveness
properties first-class and measurable, so the stream those algorithms consume is the
stream their correctness assumes.

\paragraph{Cloud-native messaging and critical-infrastructure security.}
MQTT~\cite{mqtt,mosquitto}, NATS~\cite{nats} and Kafka~\cite{kafka} are the standard
cloud-native telemetry transports; \sys repurposes that layer for \emph{verification},
requiring the five guarantees of Sect.~\ref{sec:guarantees} rather than best-effort
delivery. Orthogonally, IoT security monitoring spans surveys~\cite{cavalli2024iotsec}, ML
intrusion detection~\cite{santhoshkumar2023mliot}, edge-oriented inference on embedded
platforms~\cite{chintri2026edge}, and attack-path analysis across
critical infrastructures~\cite{stellios2018iot} over the edge-cloud
continuum~\cite{shi2016edge}. The tiered pattern recurs across sectors, including resilience
architectures for electrical power and energy systems~\cite{resilagent2026},
constrained-action response in security operations~\cite{guardrails2026} and
cross-border compliance and forensics platforms~\cite{cyberbridge2026}, and the
same fabric carries a mission-level assurance workload for LLM-assisted ISR
swarms~\cite{isrswarm2026}, so the delivery-layer guarantees are
sector-independent; \sys contributes that delivery layer, not the detection logic.

\section{Limitations and Threats to Validity}
\label{sec:limits}
\emph{Testbed scope.} The throughput, crash-recovery and detection experiments run on a
single containerised host. That establishes that each mechanism holds its stated invariant
under injected faults and that removing it reintroduces one specific failure; it does not
establish absolute throughput or latency, which are host- and network-dependent (the
fresh-host re-execution reproduces the behaviour, not the numbers). Capacity is reported as
bounds a scale-out must respect (Sect.~\ref{sec:eval}), and crash tests cover one gateway's
relay path, not correlated failure. The evaluation is not uniformly single-host: P3.6's
skew resilience is measured across \emph{two independent machines} with separate kernels
and a real network (Q4) against an $\epsilon{\approx}0$ control. Next are multi-host
operation at fleet scale, a cross-region WAN and correlated multi-gateway partition.
\emph{Clock-tick source.} The tick is backend-resident, so silence-liveness survives the
loss of every gateway (verified with both killed); the residual dependency is the backend,
which production should replicate. \emph{Trusted gateway boundary.} The sidecar and outbox
lie inside the trusted computing base, so a compromised gateway that emits plausible
heartbeats while suppressing evidence is out of scope: \sys detects a gateway that goes
\emph{silent} (P3.7) but not one that lies; hash-linked event-sequence commitments would
make suppression tamper-evident. \emph{Real-data study.} It validates that the fleet
properties fire end-to-end on real traces, not the continuity mechanisms, over a synthetic
partition with untuned detection. The crash counts (Q6) and the zero gateway-order
violations (Q1) reproduce exactly, while the device-order count moves with the randomised
time-spoof workload.

\section{Conclusion}
\label{sec:conclusion}
\sys turns a single-host, shared-log RV prototype into a resilient delivery layer for
security monitoring of critical edge-IoT deployments: five broker-contingent continuity
mechanisms add crash- and overload-resilient delivery and close the observed ordering and
silence-liveness gaps, at an overhead small relative to second-scale detection. \sys also
makes evidence completeness part of runtime-verification semantics, so faults and overload
can no longer silently turn missing evidence into a false all-clear: against a fault-free
oracle through the real MonPoly engine, the full fabric preserves all seven injected
incidents whereas the shared log preserves one and misreports the other six. That link
between transport continuity and verdict trustworthiness is the resilience layer missing
between hierarchical RV logic and its deployment on critical infrastructure.

\begin{credits}
\subsubsection{\ackname}
This work has received funding from the European Union's Digital Europe Programme
(CYBERGUARD, No~101190251) and from its Horizon Europe research and innovation
programme through the Smart Networks and Services Joint Undertaking (XTRUST-6G,
No~101192749).

\subsubsection{\discintname}
The authors have no competing interests to declare that are relevant to the content of
this article.
\end{credits}

\bibliographystyle{splncs04}
\bibliography{references}

\end{document}